\newtheorem{lemma}{Lemma}
\newtheorem{theorem}{Theorem}
\newtheorem*{unnumbered}{Rate-distortion Theorem:}
\newtheorem{corollary}{Corollary}
\begin{document}
\title{On Constrained  Randomized Quantization}
\author{Emrah~Akyol,~Student Member,~IEEE, and Kenneth~Rose,~Fellow,~IEEE
\thanks{Authors are with the Department of Electrical and Computer Engineering,
University of California, Santa Barbara, CA, 93106 USA, e-mail: \{eakyol,rose\}
@ece.ucsb.edu.%
}
\thanks{This work is supported by the NSF under the grants CCF-0728986, CCF-1016861 and CCF 1118075. The material in this paper was presented in part at the
IEEE Data Compression Conferences, March 2009 and April 2012. %
} }

\markboth{submitted to IEEE Transactions  on Signal Processing}%
{{Akyol, Rose}: On Optimal Randomized Quantization}

\maketitle
\begin{abstract}
Randomized (dithered) quantization is a method capable of achieving
white reconstruction error independent of the source. Dithered quantizers
have traditionally been considered within their natural setting of
uniform quantization. In this paper we extend conventional dithered
quantization to nonuniform quantization, via a subterfage: dithering
is performed in the companded domain. Closed form necessary conditions
for optimality of the compressor and expander mappings are derived
for both fixed and variable rate randomized quantization. Numerically,
mappings are optimized by iteratively imposing these necessary conditions.
The framework is extended to include an explicit constraint that deterministic
or randomized quantizers yield  reconstruction error that is uncorrelated
with the source. Surprising theoretical results show direct and simple
connection between the optimal constrained quantizers and their unconstrained
counterparts. Numerical results for the Gaussian source provide strong
evidence that the proposed constrained randomized quantizer outperforms
the conventional dithered quantizer, as well as the constrained deterministic
quantizer. Moreover, the proposed constrained quantizer renders the reconstruction error nearly white.  In the second part of the paper, we investigate whether
uncorrelated reconstruction error requires random coding to achieve
asymptotic optimality. We show that for a Gaussian source, the optimal
vector quantizer of asymptotically high dimension whose quantization
error is uncorrelated with the source, is indeed random. Thus, random
encoding in this setting of rate-distortion theory, is not merely
a tool to characterize performance bounds, but a required property
of quantizers that approach such bounds. 
\end{abstract}
\begin{keywords} Source coding, dithered quantization, subtractive dithering, compander, quantizer design, analog mappings.
\end{keywords}

\section{Introduction}

\label{sec:intro}

Dithered quantization is a randomized quantization method introduced in \cite{roberts1962picture}. A central motivation for dithered quantization is its ability to yield
quantization error that is independent of the source, which
can be achieved if certain conditions, determined by Schuchman, are
met \cite{Schuchman}. Traditionally, dithered quantization has been
studied in the framework where the quantizer is uniform (with step
size $\Delta$) and the dither signal is uniformly distributed over
$(-\frac{\Delta}{2},\frac{\Delta}{2})$, matched to the quantizer
interval as shown in Figure 1. A uniformly distributed dither
signal is added before quantization and the same dither signal is
subtracted from the quantized value at the decoder side. Note that
only subtractive dithering is considered in this paper. In the variable
rate case, the quantized values are entropy coded, conditioned on
the dither signal. Randomized (dithered) quantizers have been studied
in the past due to important properties that differentiate them from
deterministic quantizers, and were employed to characterize rate-distortion
bounds for universal compression \cite{ziv_quantization,graydq}.
Zamir and Feder provide extensive studies
of the properties of dithered quantizers \cite{zamiruqr,zamirirp}.

\begin{figure*}
\includegraphics[scale=0.30]{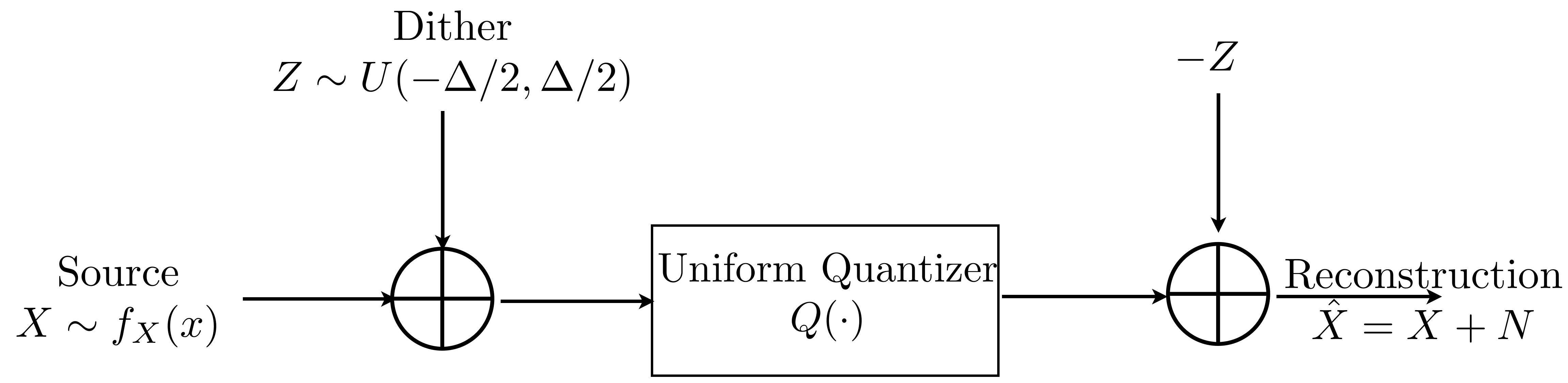} \centering \caption{The basic structure of dithered quantization}
\label{fig1} %
\end{figure*}

Beyond its theoretical significance, randomized quantization is of
practical interest. Many filter/system optimization problems in practical
compression settings, such as the ``rate-distortion optimal filterbank
design'' problem \cite{mihcak}, or low rate filter optimization for
DPCM compression of Gaussian auto-regressive processes \cite{guleryuz},
assume quantization noise that is independent of (or uncorrrelated
with) the source. Although this assumption is satisfied at asymptotically
high rates \cite{gershobook}, such systems are mostly useful for
very low rate applications. For example, in \cite{guleryuz}, it is
stated that the assumptions made in the paper are not satisfied by
deterministic quantizers, and that dithered quantizers satisfy the
assumptions exactly. However, conventional (uniform) dithered quantization
suffers from suboptimal compression performance. Hence, a quantizer
that mostly satisfies the assumptions, but at  minimal cost in
performance degradation, would have considerable impact on many such
applications. 

In this paper, we consider a generalization to enable
effective dithering of nonuniform quantizers. To the best of our knowledge,
this paper is the first attempt (other than our preliminary work in
\cite{akyol2009nonuniform,akyoltowards}) to consider dithered quantization in
a nonuniform quantization framework. One immediate problem with nonuniform
dithered quantization is how to apply dithering to unequal quantization
intervals. In traditional dithered quantization, the dither signal
is matched to the uniform quantization interval while maintaining
independence of the source, but it is not clear how to match the generic
dither to varying quantization intervals. As a remedy to this problem,
we propose dithering in the companded domain. We derive the closed form necessary conditions
for optimality of the compressor and expander mappings 
for both fixed and variable rate randomized quantization. We numerically optimize the 
mappings by iteratively imposing these necessary conditions.

 However,  the resulting (unconstrained randomized)
quantizer does not render reconstruction error orthogonal to the source. Therefore,
we extend the framework to include an explicit such constraint. Surprising
theoretical results show direct and simple connections between the
optimally constrained random quantizers and their unconstrained counterparts.
We note in passing that the nonuniform dithered quantizer subsumes
the conventional uniform dithered quantizer as an extreme special
case. 

For the variable rate case, the proposed nonuniform dithered quantizer is expected to outperform
the conventional dithered quantizer, most significantly at low rates
where the optimal variable rate (entropy coded) quantizer is often
far from uniform. We observe that a deterministic quantizer cannot
render the quantization noise independent of the source but can make
it uncorrelated with the source. We hence also present an alternative
deterministic quantizer that provides quantization noise uncorrelated
with the source. We derive the optimality conditions of such constrained
quantizers, for both fixed and variable rate quantization, and compare
their rate-distortion performance to that of randomized quantizers. 

Dithered quantization offers an interesting theoretical twist. Randomized
quantization is an instance of the random encoding principle used
to elegantly prove the achievability of coding bounds in rate distortion
theory \cite{coverbook}. However, to actually achieve those bounds,
a random encoding scheme is not necessary, as they can be approached
by a sequence of deterministic quantizers of increasing block length.
In the second part of the paper, we investigate the settings under
which randomized quantization is asymptotically necessary. A trivial
example involves requiring source-independent quantization error.
It is obvious that the reconstruction (hence quantization error) is
a deterministic function of the source when the quantizer is deterministic
\cite{gershobook}, while conventional dithered quantization produces
quantization error that is independent of the source. Although a deterministic
quantizer can never render the quantization error independent of the
source, it can produce quantization error uncorrelated with the source. A natural
question is whether the rate distortion bound, subject to the uncorrelated
error constraint, can be achieved (asymptotically) with a deterministic
quantizer.

The paper is organized as follows: In Section III, we present the proposed nonuniform randomized quantizers, along with its extension to constrained randomized quantizer that renders the quantization error orthogonal to the source. In Section IV, we derive the necessary conditions of optimality for the deterministic quantizer that generates
reconstruction error uncorrelated with the source. In Section V, we
study the asymptotic (in quantizer dimension) results, and show that for a Gaussian source optimal constrained quantizer must be randomized. Experimental
results that compare the proposed quantizers to the conventional dithered quantizer are presented in Section VI. We discuss the obtained results and summarize the contributions in Section VII.

\begin{figure*}
\centering \includegraphics[scale=0.28]{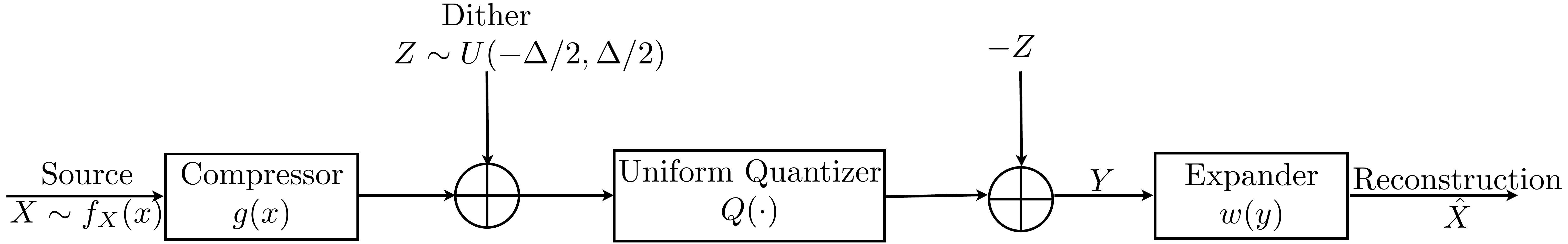} \caption{The proposed nonuniform dithered quantizer}

\label{fig2} %
\end{figure*}

\section{Review of Dithered Quantization}

\subsection{Notation and Preliminaries}
 In general, lowercase letters (e.g., $x$) denote scalars, boldface lowercase (e.g., $\boldsymbol x$) vectors, uppercase (e.g., $U, X$) matrices and random variables, and boldface uppercase (e.g., $\boldsymbol X$) random
vectors.  $\mathbb E[\cdot]$,  $ R_X$, and $  R_{XZ}$ denote the expectation, covariance of $\boldsymbol X $ and cross covariance of $\boldsymbol X$ and $\boldsymbol Z$ respectively\footnote{We assume zero mean random variables. This assumption is not necessary, but it considerably simplifies the notation. Therefore, it is kept throughout the paper.}. $\nabla$ denotes the gradient.  $\cal N(\boldsymbol \mu, \boldsymbol K)$ denotes the Gaussian random vector with mean $\boldsymbol \mu$ and covariance matrix $ K$.

The entropy of a discrete random vector source $\boldsymbol{X}\in\mathbb{R}^{K}$
taking values in ${\cal X}$ is

\begin{equation}
H(\boldsymbol{X})=-\sum_{\boldsymbol{x}\in{\cal X}}P(\boldsymbol{X}=\boldsymbol{x})\log P(\boldsymbol{X}=\boldsymbol{x})\end{equation}
 where logarithm is base 2 to measure it in bits. The differential
entropy of a continuous random variable $\boldsymbol{X}$ with probability
density function $f_{X}(x)$ is \begin{equation}
h(\boldsymbol{X})=-\int f_{X}(\boldsymbol{x})\log f_{x}(\boldsymbol{x})d\boldsymbol{x}\label{difent}\end{equation}
 The divergence between two densities $f_{X}$ and
$g_{X}$, is given by \begin{equation}
\mathcal{D}(f_{X}||g_{X})=\int f_{X}(\boldsymbol{x})\log\frac{f_{X}(\boldsymbol{x})}{g_{X}(\boldsymbol{x})}d\boldsymbol{x}\end{equation}
The divergence definition above can be extended to conditional densities. For joint densities, $f_{XY}$ and $g_{XY}$ the conditional divergence $\mathcal{D}(f_{X|Y}||g_{X|Y})$  is defined as the divergence between the conditional distributions $f_{X|Y}$ and $g_{X|Y}$ averaged over the density $f_Y(\boldsymbol y)$: 
\begin{equation}
\mathcal{D}(f_{X|Y}||g_{X|Y})\!=\! \int \!  f_{Y}\! (\boldsymbol{y}) \! \int \! f_{X|Y}(\boldsymbol{x,y})\log\frac{f_{X|Y}(\boldsymbol{x,y})}{g_{X|Y}(\boldsymbol{x,y})}d\boldsymbol{x}d\boldsymbol{y}
\end{equation}

 The mutual information between two random variables $\boldsymbol X$ and
$\boldsymbol Y$ with marginal densities $f_X(\boldsymbol x)$ and $f_Y(\boldsymbol y)$ and a joint density $f_{XY}(\boldsymbol x,\boldsymbol y)$  is given by 
\begin{equation}
{I}(\boldsymbol X,\boldsymbol Y)=\int \int f_{X, Y}(\boldsymbol{x},\boldsymbol{y})\log\frac{f_{X,Y}(\boldsymbol{x,y})}{f_{X}(\boldsymbol{x})f_{Y}(\boldsymbol{y})}d\boldsymbol{x}d\boldsymbol{y}
\end{equation}

Zero-mean vectors $\boldsymbol{x}\in \mathbb R^K$ and $\boldsymbol{y}\in \mathbb R^M$ are said
to be uncorrelated if they are orthogonal: \begin{equation}
\mathbb{E}[ \boldsymbol{y}\boldsymbol{x}^{T} ]={0}\label{rate}\end{equation}
where the right hand size is $M\times K$ matrix of zeros. 


\subsection{Dithered Quantization}
 A quantizer is defined by a set of reconstruction
points and a partition. The partition $\mathcal{P}=\{\mathcal{P}_{i}\}$
associated with a quantizer is a collection of disjoint regions whose
union covers $\mathbb{R}^{K}$. The reconstruction points $\mathcal{R}=\{\boldsymbol{r}_{i}\}$
are typically chosen to minimize a distortion measure. The vector
quantizer is a mapping $\mathcal{Q}_{K}: \mathbb{R}^{K}\rightarrow\mathbb{R}^{K}$ that 
maps every vector $\boldsymbol{X}\in\mathbb{R}^{K}$ into the reconstruction
point that is associated with the cell containing $\boldsymbol{X}$,
i.e. \begin{equation}
Q_{K}(\boldsymbol{X})=\boldsymbol{r_{i}}\,\,\,\text{if }\,\,\,\boldsymbol{X}\in\mathcal{P}_{i}\end{equation}

While our theoretical results are general, for a vector quantizer
of arbitrary dimensions, for presentation simplicity, we will primarily
focus on scalar quantization in the treatment of numerical optimization
of nonuniform dithered quantizer and for experimental results. The
nonuniform dithered quantization approach is directly extendable to
vector quantization by replacing the companded domain uniform quantizer with a lattice
quantizer, although at the cost of significantly more challenging
numerical optimization. 

The scalar uniform quantizer, with reconstructions
\{$0,\pm\Delta,\pm2\Delta,...,\pm T\Delta$\}, is a mapping $Q:\mathbb{R}\rightarrow\mathbb{R}$
such that \begin{equation}
Q(x)=i\Delta\,\,\,\text{ for }\,\, i\Delta-\Delta/2<x \leq i\Delta+\Delta/2\end{equation}
 In fixed rate quantization, the range parameter $T$ is determined
by the rate $R_{f}$ \begin{equation}
R_{f}=\log(2T+1)\end{equation}
 while in variable rate quantization $T$ need not, in principle,
be finite and we will assume $T\rightarrow\infty$. In this case,
uniform quantization is followed by lossless source encoding (entropy
coder). 

 Let dither $Z$ be a random variable,
distributed uniformly on the interval $(-\Delta/2,\Delta/2)$. Then,
conventional dithered quantizer approximates the source $X$ by \begin{equation}
\hat {X}=Q(X+Z)-Z\end{equation}

 It can be shown that the reconstruction error of this quantizer (denoted $N$)
is independent of the source value $X=x$, i.e., $N=\hat{X}-X=Q(X+Z)-Z-X$
 is independent of $X$ and uniformly distributed
over $(-\Delta/2,\Delta/2)$ for all $X$.  Contrast that with a deterministic quantizer, whose error is completely
determined by the source value \cite{gershobook}.

We note that for this property to hold, the quantizer should span the support of the source density i.e., there should be no overload distortion. While this is often the case for variable rate quantization, for fixed rate overload distortion is inevitable if the source has unbounded support such as a Gaussian source. For practical purposes though, it is common to assume that the source has finite support and we also follow this assumption in our analysis of fixed rate randomized quantization: the quantization error of conventional (uniform) dithered quantization is assumed to be independent of the source.

The realization of the dither random variable $Z$ is available to
both the encoder and the decoder. Thus, assuming an optimal entropy coder, the rate of the variable rate quantizer
tend to the conditional entropy of the reconstruction given the dither,
i.e., \begin{equation}
R_{v}=H(\hat{X}|Z)=H(Q(X+Z)|Z)\end{equation}
 In \cite{zamiruqr}, it was shown that the following holds: \begin{equation}
H(Q(X+Z)|Z)=h(X+N)-\log\Delta\label{rate}\end{equation}

We will use (\ref{rate}) in the rate calculations of the variable rate (entropy coded) randomized quantization. 


\section{Nonuniform Dithered Quantizer}

The main idea is to circumvent the main difficulty due to unequal
quantization intervals by performing uniform dithered quantization
in the companded domain (see Figure 2). The source $X$
is transformed through compressor $g(\cdot)$ before undergoing dithered uniform
quantization. At the decoder side, the dither is subtracted to obtain
$Y$. Since we perform uniform dithered quantization in the companded
domain, it is easy to show that $Y=g(X)+N$, where $N$ is uniformly
distributed over $(-\Delta/2,\Delta/2)$ and independent of the source.
The reconstruction is obtained by applying the expander $\hat{X}=w(Y)$.
The objective is to find the optimal compressor and expander mappings
$g(\cdot),w(\cdot)$ that minimize the expected distortion under the rate
constraint. The MSE distortion can be written as: \begin{equation}
D=\int\int[x-w(g(x)+n)]^{2}f_{X}(x)f_{N}(n)dxdn\label{dist-1}\end{equation}
 where $f_{N}(n)$ is uniform over $(-\Delta/2,\Delta/2)$. Interestingly,
this problem bears some similarity to the joint source channel mapping
problem where the optimal analog encoding and decoding mappings are
studied \cite{emrah_itw10}. In our setting, the quantization error
is analogous to the channel noise and the rate constraint in variable rate quantization plays a
role similar to that of the power constraint. Similar to \cite{emrah_itw10},
we develop an iterative procedure that enforces the necessary conditions
for optimality of the mappings. Note that the conventional (uniform) dithered
quantizer is a special case employing the trivial identity mappings,
i.e., $g(x)\!=\! w(x)\!=\! x,\forall x$.


\subsection{Optimal Expander}

\noindent The conditional expectation $w(y)=\mathbb{E}\{X|Y=y\}$ minimizes
MSE between the source and the estimate. Hence, the optimal
expander $w$ is
 \begin{equation}
w(y)=\frac{\int\limits _{\gamma_{-}}^{\gamma_{+}}xf_{X}(x)dx}{\int\limits _{\gamma_{-}}^{\gamma_{+}}f_{X}(x)dx}\label{expander}\end{equation}
 where for fixed rate $\gamma_{+}=\min\{g^{-1}(\Delta T),g^{-1}(y+\Delta/2)\}$
and $\gamma_{-}=\max\{g^{-1}(-\Delta T),g^{-1}(y-\Delta/2)\}$, while
for variable rate $\gamma_{+}=g^{-1}(y+\Delta/2)$ and $\gamma_{-}=g^{-1}(y-\Delta/2)$. \\

 {\bf Note}: We restrict the discussion to regular quantizers throughout this paper, hence $g(\cdot)$ is monotonically increasing.
\subsection{Optimal Compressor}
Unlike the expander, the optimal compressor cannot be written in closed
form. However, a necessary optimality condition can be obtained by
setting the functional derivative of the cost to zero. Thus, a locally optimal compressor $g(\cdot)$,  for a given expander $w(\cdot)$, requires that the functional derivative of the total cost,
$J$, along the direction of any variation function $\eta(\cdot)$ vanishes
\cite{Luenberger}, i.e., \begin{equation}
\nabla J=\frac{\partial}{\partial\epsilon}\bigg|_{\epsilon=0}J\left[{g}({x})+\epsilon{\eta}({x})\right]=0,\,\forall x\in\mathbb{R}\end{equation}
for all admissible perturbation functions $\eta(\cdot)$.
\subsubsection{Fixed rate}
For fixed rate, we have granular distortion, denoted $D_{g}$, and
overload distortion, denoted $D_{ol}$. Note that we must account for the overload
distortion here, as this constrains $g(x)$ from growing unboundedly in the iterations of the proposed algorithm. Since the rate is fixed, the total
cost is identical to the distortion, i.e., $J_{f}=D_{g}+D_{ol}$ where $D_{g}$ and $D_{ol}$ are:
 \begin{equation}
D_{g}\!=\!\frac{1}{\Delta}\int\limits _{-\Delta/2}^{\Delta/2}\int\limits _{g^{-1}(-\Delta T)}^{g^{-1}(\Delta T)}[x-w(g(x)+n)]^{2}f_{X}(x)dxdn\label{dist_fixed}
\end{equation}
 \begin{align}
D_{ol}\! &=  \frac{1}{\Delta}\int\limits _{-\Delta/2}^{\Delta/2}\left \{ \int\limits _{-\infty}^{g^{-1}(-\Delta T)}[x-w(-T\Delta+n)]^{2}f_{X}(x)dx  \right .\nonumber \\
 &\left .+  \int\limits _{g^{-1}(\Delta T)}^{\infty}[x-w(T\Delta+n)]^{2}f_{X}(x)dx \right \}dn
\end{align}

\subsubsection{Variable rate}
 The rate is obtained via (12) and (\ref{rate}), which require the distribution of
$Y=g(X)+N$: 
 \begin{equation}
f_{Y}(y)=\frac{1}{\Delta}\left[F_{X}(g^{-1}(y+\Delta/2))\!-\! F_{X}(g^{-1}(y-\Delta/2))\right]\end{equation}
 where $F_{X}(x)$ is the cumulative distribution function of $X$.
The rate is then evaluated as \begin{equation}
R_{v}=h(Y)-\log\Delta\end{equation}
 The total cost for variable rate quantization is $J_{v}=D+\lambda R$
 where $\lambda$ is the Lagrangian parameter that is adjusted to
obtain the desired rate.

\subsection{Design Algorithm}

The basic idea is to iteratively alternate between enforcing the individual necessary
conditions for optimality, thereby successively decreasing the total
cost. Iterations are performed until the algorithm reaches a stationary
point. Solving for the optimal expander is straightforward since the
expander is expressed in closed form as a functional of the known
quantities, ${g}({\cdot})$, $f_{X}({\cdot})$. Since the compressor condition
is not in closed form, we perform steepest descent, i.e., move in
the direction of the functional derivative of the total cost with
respect to the compressor mapping $g$. By design, the total cost
decreases monotonically as the algorithm proceeds iteratively. The
compressor mapping is updated according to (\ref{g_iter}), where
$i$ is the iteration index, $\nabla{J[g(\cdot)]}$ is the directional derivative
and $\mu$ is the step size. 
\begin{equation}
{g}_{i+1}({x})={g}_{i}({x})-\mu\nabla{J[g]}\label{g_iter}
\end{equation}

Note that there is no guarantee that an iterative descent algorithm
of this type will converge to the globally optimal solution. The algorithm
will converge to a local minimum and hence, initial conditions are important in such greedy optimizations. A low
complexity approach to mitigate the poor local minima problem, is
to embed within the solution {}the ``noisy channel relaxation''
method of \cite{gadkari1999robust,Knagenhjelm}. We initialize the
compressor mapping with random initial conditions and run the algorithm
for a very low rate (large value for the Lagrangian parameter $\lambda$). Then,
we gradually increase the rate (decrease $\lambda$) while tracking the
minimum. Note that local minima problem is more pronounced at multi-dimensional optimizations, which hence requires more powerful non-convex optimization tools such as deterministic annealing \cite{da}. In our design and experiments, we focus on scalar compressor and expander and  we did not observe significant local minima problems.  

\section{Reconstruction Error Uncorrelated with the Source }
 In this section, we propose two quantization schemes (one deterministic, one randomized) that satisfy the constraint that  reconstruction error be uncorrelated with the source. 
\subsection{Constrained Deterministic  Quantizer}
A deterministic quantizer cannot yield quantization noise independent
of the source \cite{gershobook}. However, it is possible to render
the quantization noise uncorrelated with the source. An early  prior work
along this line appeared in \cite{oba}, where a uniform quantizer is
converted to a quantizer whose quantization noise is uncorrelated
with the source, by adjusting the reconstruction points. In this section, we derive the optimal (nonuniform in general) deterministic
quantizer which is constrained to give quantization error uncorrelated
with the source. 

Let $\boldsymbol{r}_{i}$ and ${\widehat{\boldsymbol{r_{i}}}}$ be
the reconstruction points and $\mathcal{P}_{i}$ and ${\widehat{\mathcal{P}_{i}}}$
represent the $i^{th}$ quantization region, for the constrained (i.e.,
whose quantization error is uncorrelated with the source) and unconstrained
MSE optimal quantizer, respectively. Also, let $p_{i}$ and $\widehat{p}_{i}$
denote the probability of $\boldsymbol X$ falling into the $i^{th}$ cell of
these respective quantizers.
 \begin{theorem}$\mathcal{P}_{i}={\widehat{\mathcal{P}_{i}}}\,\,\,\text{and}\,\,\,\boldsymbol{r_{i}}=\boldsymbol{C}{\boldsymbol{\hat{r}_{i}}},\forall i$
where
 $$\boldsymbol{C}={R}_{X}\left(\sum\limits _{i=1}^{M}p_{i}{\boldsymbol{\hat{r}_{i}}}{\boldsymbol{\hat{r}_{i}^{T}}}\right)^{-1}.$$
 \end{theorem}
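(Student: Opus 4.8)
The plan is to set this up as a constrained functional optimization: minimize $D=\mathbb E\|\hat{\boldsymbol X}-\boldsymbol X\|^2$ over the partition $\{\mathcal P_i\}$ and the reconstruction points $\{\boldsymbol r_i\}$ subject to the orthogonality constraint $\mathbb E[(\hat{\boldsymbol X}-\boldsymbol X)\boldsymbol X^T]=\boldsymbol 0$, which is the single matrix equation $\mathbb E[\hat{\boldsymbol X}\boldsymbol X^T]=R_X$. The first move is to observe that on the feasible set the distortion collapses: expanding $D=\mathbb E\|\boldsymbol X\|^2-2\,\mathrm{tr}\,\mathbb E[\hat{\boldsymbol X}\boldsymbol X^T]+\mathbb E\|\hat{\boldsymbol X}\|^2$ and inserting the constraint gives $D=\mathbb E\|\hat{\boldsymbol X}\|^2-\mathrm{tr}(R_X)$. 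Hence the problem is equivalent to minimizing $\mathbb E\|\hat{\boldsymbol X}\|^2=\sum_i p_i\|\boldsymbol r_i\|^2$, with $p_i$ the cell probability and $\boldsymbol m_i=\int_{\mathcal P_i}\boldsymbol x f_X(\boldsymbol x)\,d\boldsymbol x$ the unnormalized cell centroid, and the constraint rewritten as $\sum_i\boldsymbol r_i\boldsymbol m_i^T=R_X$.

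Next, I would fix the partition and optimize the reconstruction points. This is a convex quadratic program with a linear matrix constraint, so I attach a matrix multiplier $\Lambda$ and minimize $\sum_i p_i\|\boldsymbol r_i\|^2+\mathrm{tr}\big(\Lambda^T(\sum_i\boldsymbol r_i\boldsymbol m_i^T-R_X)\big)$. Stationarity in $\boldsymbol r_i$ yields $\boldsymbol r_i=-\tfrac12\Lambda\,\boldsymbol m_i/p_i=\boldsymbol C\,\boldsymbol c_i$, where $\boldsymbol c_i=\boldsymbol m_i/p_i$ is the conditional centroid of cell $i$ and $\boldsymbol C=-\tfrac12\Lambda$ is a single matrix common to all cells; substituting back into the constraint forces $\boldsymbol C=R_X\big(\sum_i p_i\boldsymbol c_i\boldsymbol c_i^T\big)^{-1}$, which is exactly the claimed form once we know $\boldsymbol c_i=\hat{\boldsymbol r}_i$. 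With these optimal points the residual, partition-dependent cost becomes $\mathrm{tr}\big(R_X^2 R_{Q_0}^{-1}\big)-\mathrm{tr}(R_X)$, where $R_{Q_0}=\sum_i p_i\boldsymbol c_i\boldsymbol c_i^T$ is the covariance of the centroid-quantized signal $\boldsymbol Q_0(\boldsymbol X)$ for that partition. (Equivalently one can attach the multiplier directly to $D$; the point condition then reads $\boldsymbol r_i=(\boldsymbol I-\tfrac12\Lambda)\boldsymbol c_i$ and the same $\boldsymbol C$ emerges.)

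Then I would optimize over the partition. Here the tool is the orthogonality property of the conditional-mean reconstruction, $\mathbb E[(\boldsymbol X-\boldsymbol Q_0(\boldsymbol X))\boldsymbol Q_0(\boldsymbol X)^T]=\boldsymbol 0$, so that $R_X=R_{Q_0}+R_{E_0}$, where $\mathrm{tr}(R_{E_0})$ is precisely the MSE of the ordinary unconstrained quantizer with that partition. In the scalar case $\mathrm{tr}(R_X^2 R_{Q_0}^{-1})=R_X^2/R_{Q_0}$ and $R_{Q_0}=R_X-R_{E_0}$, so minimizing the residual cost is the same as minimizing $R_{E_0}$, i.e.\ choosing the partition to be the unconstrained MSE-optimal one. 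Thus $\mathcal P_i=\widehat{\mathcal P_i}$, which forces $\boldsymbol c_i=\hat{\boldsymbol r}_i$ and $p_i=\hat p_i$; substituting into the expression for $\boldsymbol C$ gives the claimed $\boldsymbol C=R_X\big(\sum_i p_i\hat{\boldsymbol r}_i\hat{\boldsymbol r}_i^T\big)^{-1}$ and $\boldsymbol r_i=\boldsymbol C\hat{\boldsymbol r}_i$.

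The step I expect to be the real obstacle is this last one in general vector dimension: the reconstruction-point part is a clean finite-dimensional convex calculation, but showing that the constrained optimum keeps the \emph{same} partition as the unconstrained optimum requires comparing $\mathrm{tr}(R_X^2 R_{Q_0}^{-1})$ across partitions. In one dimension this is immediate from $R_{Q_0}=R_X-R_{E_0}$; in higher dimension one must argue that the MSE-optimal partition also makes $R_{Q_0}$ largest in the relevant (positive-semidefinite, or at least trace-dominant) sense so that $\mathrm{tr}(R_X^2 R_{Q_0}^{-1})$ is simultaneously minimized. Alternatively, working from the nearest-neighbor stationarity condition, I would check that the modified cell-assignment cost $\|\boldsymbol r_i\|^2-2\boldsymbol x^T\boldsymbol C^T\boldsymbol r_i$ differs from the squared distance of $\boldsymbol C\boldsymbol x$ to $\boldsymbol r_i$ only by a term independent of $i$, which in the scalar case (positive scalar $\boldsymbol C$) collapses exactly to the unconstrained nearest-neighbor partition.
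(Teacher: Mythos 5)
Your proposal is correct and follows essentially the same route as the paper: a matrix Lagrange multiplier on the orthogonality constraint yields $\boldsymbol r_i=\boldsymbol C\boldsymbol c_i$ with $\boldsymbol C=R_X\left(\sum_i p_i\boldsymbol c_i\boldsymbol c_i^T\right)^{-1}$, and the residual cost on the constraint set is a monotone function of the unconstrained centroid-quantizer distortion, which forces $\mathcal P_i=\widehat{\mathcal P_i}$. The vector-dimension obstacle you flag in your final paragraph is real, but it is equally present in the paper's own proof, whose key identity $D=\sigma_X^2D^*/(\sigma_X^2-D^*)$ is stated and justified only in scalar form.
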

\begin{proof}
We start with the fixed rate analysis. Let $M$ denote the number
of quantization cells. The distortion can be expressed as
\begin{equation}
D=\sum\limits _{i=1}^{M}\int\limits _{\boldsymbol{x}\in\mathcal{P}_{i}}{(\boldsymbol{x}-\boldsymbol{r}_{i})^{T}(\boldsymbol{x}-\boldsymbol{r}_{i})f_{X}(\boldsymbol{x})\, d\boldsymbol{x}}\label{distortion}\end{equation}
 and the {}``uncorrelatedness'' constraint may be stated
via the orthogonality principle \begin{equation}
\sum\limits _{i=1}^{M}\int\limits _{\boldsymbol{x}\in\mathcal{P}_{i}}\boldsymbol{x}(\boldsymbol{x}-\boldsymbol{r}_{i})^{T}f_{X}(\boldsymbol{x})\, d\boldsymbol{x}=\boldsymbol{0}\label{constraint}\end{equation}
 Note further that (\ref{constraint}) can be written as: \begin{equation}
\sum\limits _{i=1}^{M}\boldsymbol{r}_{i}\boldsymbol{l}_{i}^{T}={R}_{X}\text{ where }\boldsymbol{l}_{i}=\int\limits _{\boldsymbol{x}\in\mathcal{P}_{i}}\boldsymbol{x}f_{X}(\boldsymbol{x})\, d\boldsymbol{x}\label{constraint2}\end{equation}
 The constrained problem of minimizing $D$ subject to $\sum\limits _{i=1}^{M}\boldsymbol{r}_{i}\boldsymbol{l}_{i}^{T}={R}_{X}$
is equivalent to the unconstrained minimization of Lagrangian $J$:
\begin{equation}
J=D+\sum\limits _{k=1}^{K}\boldsymbol{\gamma}(k)^{T}\left[{R}_{X}(k)-\sum\limits _{i=1}^{M}\boldsymbol{r}_{i}\boldsymbol{l}_{i}(k)\right]\label{lagcost}
\end{equation}
 where $\boldsymbol{\gamma}=[\boldsymbol{\gamma}(1)\,\,\,\boldsymbol{\gamma}(2)...\,,\boldsymbol{\gamma}(K)]$
denotes the $K\times K$ multiplier Lagrangian matrix, ${R}_{X}(k)$
denotes the $k^{th}$ column  of ${R}_{X}$ and $l_{i}(k)$
denotes the $k^{th}$ element of $\boldsymbol{l}_{i}$. 
By setting $\nabla_{r_{i}}J=\boldsymbol{0}$, we obtain the condition: 
\begin{equation}
\nabla_{r_{i}}J=-2\boldsymbol{l}_{i}^{T}+2p_{i}\boldsymbol{r}_{i}^{T}-\sum\limits _{k=1}^{K}\gamma(k)^{T}{l}_{i}(k)=\boldsymbol{0}\label{grad}
\end{equation}
 Noting that $\sum_{k=1}^{K}\gamma(k)^{T}{l}_{i}(k)=\boldsymbol{l}_{i}^{T}\boldsymbol{\gamma}$,we obtain $\boldsymbol{r}_{i}=\frac{1}{p_{i}}{C}\boldsymbol{l}_{i}\label{constant}$ where ${C}$ is a constant $M\times M$ matrix. $\boldsymbol{C}$ is found by plugging this into (\ref{constraint2}):
\begin{equation}
\boldsymbol{C}={R}_{X}\left(\sum\limits _{i=1}^{M}\frac{1}{p_{i}}\boldsymbol{l}_{i}\boldsymbol{l}_{i}^{T}\right)^{-1}\label{c}
\end{equation}
 Note that $\boldsymbol{l}_{i}/p_{i}$ is the MSE optimal reconstruction
of an unconstrained quantizer that shares the same decision boundary with the constrained one, 
$\mathcal{P}_{i}$. Plugging (\ref{c}) into (\ref{distortion}) and
after some algebraic manipulations, we obtain: \begin{equation}
D=\frac{\sigma_{X}^{2}}{\sigma_{X}^{2}-D^{*}}D^{*}\label{dist}\end{equation}
 where $D^{*}$ is the distortion associated with the quantizer given
by $\mathcal{P}_{i}$ and with corresponding optimal reconstruction
points $\boldsymbol{l}_{i}/p_{i}$. (\ref{dist}) implies that $D$
achieves its minimum whenever $D^{*}$ is minimized. Hence, \begin{equation}
\mathcal{P}_{i}=\widehat{\mathcal{P}_{i}}\label{pi} \Rightarrow \boldsymbol l_i=p_i \boldsymbol{\hat{r_i}}
\end{equation}
Plugging (\ref{pi}) into  (\ref{c}), we obtain the result.  The proof for variable rate follows similar lines, with the only
modification that we now have to account for the rate term $R=\sum\limits _{i=1}^{M}p_{i}\log p_{i}$.
The uncorrelatedness constraint is identical to the one in fixed rate,
hence the overall Lagrangian cost can be expressed as: \begin{equation}
J=D+\sum\limits _{k=1}^{M}\boldsymbol{\gamma}(k)^{T}\left[\boldsymbol{R}_{X}(k)-\sum\limits _{i=1}^{M}\boldsymbol{r}_{i}\boldsymbol{l}_{i}(k)\right]+\lambda\sum\limits _{i=1}^{M}p_{i}\log p_{i}\end{equation}
 By setting $\nabla_{r_{i}}J=\boldsymbol{0}$ and following the same
steps, we obtain: \begin{equation}
\boldsymbol{r}_{i}=\frac{1}{p_{i}}\boldsymbol{R}_{X}\left(\sum\limits _{i=1}^{M}\frac{1}{p_{i}}\boldsymbol{l}_{i}\boldsymbol{l}_{i}^{T}\right)^{-1}\boldsymbol{l}_{i}\label{constant2}\end{equation}
 Note that (\ref{dist}) holds due to (\ref{constant2}) and the optimal
unconstrained quantizer achieves the minimum distortion $D^{*}$ subject
to the rate constraint. This indicates that the constrained and unconstrained
quantizers have identical $p_{i}$ and hence $\mathcal{P}_{i}=\widehat{\mathcal{P}_{i}}$ which implies $\boldsymbol l_i=p_i \boldsymbol{\hat{r_i}}$.
Plugging $\boldsymbol l_i=p_i \boldsymbol{\hat{r_i}}$ into (\ref{constant2}), we obtain the desired result.

\end{proof}

\subsection{Constrained Randomized Quantizer}
Due to the effect of companding, the nonuniform randomized quantizer we derived in Section III does not guarantee reconstruction error uncorrelated
with the source even though it builds on the (conventional) dithered
quantizer whose quantization error is independent of the source. We
therefore explicitly constrain the randomized quantizer to generate
uncorrelated reconstruction error, by adding a penalty term to the
total cost function. The Lagrangian parameter $\lambda_{c}\geq0$
is set to ensure $\mathbb{E}\{xw(g(x)+n)\}=\mathbb{E}\{x^{2}\}$. 
\begin{equation}
J_{c}=J+\lambda_{c}\mathbb{E}[x^2-x w(g(x)+n)] 
\label{top}
\end{equation}
 where $J=J_{v}$ in the case of variable rate and $J=J_{f}$ for
fixed rate. We find the necessary conditions of optimality of constrained compressor and expander mappings at  fixed
and variable rate, by setting the functional derivative of the total cost ($J_c$) to zero. Surprisingly,
the optimally constrained compressor mapping remains unchanged (compared to the unconstrained optimal compressor) and the only modification
of the optimally constrained expander mapping is simple scaling. We state this result
in the following theorem.
\begin{theorem} Let $g$ and $w$ be the  compressor and expander
mappings of the unconstrained optimal randomized quantizer. Let $g_{c}$
and $w_{c}$ denote the optimal mappings subject to the constraint
that the reconstruction error be uncorrelated with the source. Then,
\begin{equation}
g_{c}(x)=g(x),w_{c}(y)=(1-\lambda_{c})w(y)\end{equation}
where $\lambda_c$ is the Lagrangian multiplier of (\ref{top}).
 \end{theorem}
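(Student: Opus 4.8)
The plan is to eliminate the expander first and then show that, once the optimal expander is substituted back in, the constrained cost becomes a strictly increasing function of the \emph{unconstrained} distortion $D^{*}$ plus an expander-free rate term; the constrained and unconstrained design problems then share the same optimal compressor. First I would fix the compressor $g$, so that $Y=g(X)+N$ has a fixed law and $m(y)=\mathbb{E}\{X\mid Y=y\}$ coincides with the unconstrained optimal expander $w(y)$ of (\ref{expander}). In the variable-rate case the rate $R_{v}=h(Y)-\log\Delta$ depends on $g$ only, and likewise the fixed-rate overload structure (up to the obvious modification), so neither term contributes when $J_{c}$ is differentiated with respect to $w$. Writing $J_{c}$ as a functional of $w$ alone and setting its pointwise functional derivative at each $y$ to zero yields a scalar linear equation in $w(y)$ whose solution is $w_{c}(y)=(1-\lambda_{c})\,m(y)=(1-\lambda_{c})\,w(y)$; that is, the constrained expander is just the unconstrained one rescaled by the constant from (\ref{top}). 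The value of that constant is pinned down a posteriori by the active constraint $\mathbb{E}\{x\,w_{c}(g(x)+n)\}=\mathbb{E}\{x^{2}\}$: using the orthogonality identities $\mathbb{E}\{X\,m(Y)\}=\mathbb{E}\{m(Y)^{2}\}=\sigma_{X}^{2}-D^{*}$, where $D^{*}=\mathbb{E}\{(X-m(Y))^{2}\}$ is the unconstrained MSE for this $g$, it is tied to the ratio $\sigma_{X}^{2}/(\sigma_{X}^{2}-D^{*})$.

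Next I would substitute $w=w_{c}$ back into $J_{c}$. Expanding the square and again invoking $\mathbb{E}\{X\,m(Y)\}=\mathbb{E}\{m(Y)^{2}\}=\sigma_{X}^{2}-D^{*}(g)$, the distortion-plus-penalty part of $J_{c}$ collapses entirely onto $D^{*}(g)$, giving $J_{c}(g;w_{c})=\mathrm{const}+\kappa\,D^{*}(g)+\lambda R(g)$ with $\kappa>0$ a constant depending on $\lambda_{c}$ only (and the $\lambda R(g)$ term absent at fixed rate). This is the exact analogue of the identity $D=\frac{\sigma_{X}^{2}}{\sigma_{X}^{2}-D^{*}}D^{*}$ of (\ref{dist}) used in the proof of Theorem 1; indeed, for $\lambda_{c}$ at its constraint-satisfying value one recovers $D_{c}=\frac{\sigma_{X}^{2}}{\sigma_{X}^{2}-D^{*}}D^{*}$.

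Since $\kappa>0$, minimizing $J_{c}(g;w_{c})$ over $g$ is equivalent to minimizing $D^{*}(g)$ in the fixed-rate case, and to minimizing $D^{*}(g)+(\lambda/\kappa)R(g)$ in the variable-rate case --- that is, exactly the unconstrained compressor-design problem (with the rate Lagrangian rescaled by $\kappa$ in the variable-rate case). Hence $g_{c}=g$, and feeding this back into the expander formula gives $w_{c}(y)=(1-\lambda_{c})\,w(y)$, which is the claim. Equivalently, by the envelope theorem one may take the functional derivative of $J_{c}$ with respect to $g$ with $w$ held at $w_{c}$ and check that it is a positive multiple of the unconstrained stationarity condition for $g$, so the stationary compressors coincide.

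The main obstacle will be the collapse step: establishing that, after the expander is optimized out, the constrained cost is a \emph{strictly increasing} function of the single scalar $D^{*}(g)$ plus an expander-independent rate term. This relies on the clean cancellations afforded by the orthogonality identities and on the fact that $\lambda_{c}$ (hence $\kappa$) is held fixed during the minimization over $g$. A secondary point needing care is the fixed-rate overload term: one must verify that its optimal reconstruction values are scaled by the same factor $1-\lambda_{c}$, so that the reduction to $D^{*}(g)$ still goes through; and in the variable-rate case one should read "$g_{c}=g$" as "$g_{c}$ equals an unconstrained-optimal compressor," the operating point being shifted along the rate--distortion tradeoff by the factor $\kappa$.
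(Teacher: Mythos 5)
Your proposal is correct, and the expander half coincides with the paper's: both set the pointwise variational derivative of $J_c$ in $w(y)$ to zero and obtain a scaled conditional mean, $w_c(y)=\mathrm{const}\cdot\mathbb{E}[X\mid Y=y]$, with the constant pinned to $\sigma_X^2/(\sigma_X^2-D^*)$ by the active constraint (your identities $\mathbb{E}[Xm(Y)]=\mathbb{E}[m(Y)^2]=\sigma_X^2-D^*$ are exactly the right tool; the paper's sign convention $1-\lambda_c$ with $\lambda_c\ge 0$ is cosmetically at odds with the fact that this constant must exceed $1$, but that is the paper's quirk, not yours). For the compressor you genuinely diverge: the paper sets the functional derivative of $J_c$ in $g$ to zero, substitutes $w_c=(1-\lambda_c)w$, and concludes $g_c=g$ by ``straightforward algebra,'' i.e.\ it verifies that the constrained stationarity condition is a positive multiple of the unconstrained one. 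You instead optimize the expander out, show $J_c(g;w_c)=\mathrm{const}+\kappa D^*(g)+\lambda R(g)$ with $\kappa=\bigl(\sigma_X^2/(\sigma_X^2-D^*)\bigr)$-type factor held fixed, and conclude the two compressor-design problems coincide. Your route buys two things the paper's does not make explicit: it yields the constrained-distortion identity $D_c=\frac{\sigma_X^2}{\sigma_X^2-D^*}D^*$, making Theorem~2 the exact analogue of Theorem~1, and it correctly exposes the caveat that in the variable-rate case the rate multiplier is rescaled by $\kappa$, so ``$g_c=g$'' should be read as equality with an unconstrained-optimal compressor at a possibly shifted operating point --- a subtlety the paper's stationarity argument silently shares. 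Your observation that the overload reconstructions are handled automatically (the pointwise optimization over $w(y)$ covers the clipped values $w(\pm T\Delta+n)$ as well) closes the one remaining gap.
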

 Note that this result applies to both fixed and variable rate. 
\begin{proof} The optimal expander is no longer the standard conditional
expectation, since it is impacted by the constraint. By setting 
\begin{equation}
\frac{\partial}{\partial\epsilon}\bigg|_{\epsilon=0}J_{c}\left[{w}({y})+\epsilon{\eta}({y})\right]=0
\end{equation}
we obtain the optimal expander in closed form as $w_{c}(y)=(1-\lambda_{c})w(y)$.
The update rule for $g_{c}(x)$ can be derived similarly. Setting 
\begin{equation}
\frac{\partial}{\partial\epsilon}\bigg|_{\epsilon=0}J_{c}\left[{g}({x})+\epsilon{\eta}({x})\right]=0
\end{equation}
and plugging $w_{c}(y)=(1-\lambda_{c})w(y)$ yields, after straightforward
algebra, $g_{c}(x)=g(x)$.
\end{proof}


\section{Asymptotic Analysis}

\subsection{Rate-Distortion Functions}

To quantify theoretically how much a source%
\footnote{The notation in this section is limited to scalar sources for simplicity,
it is trivial to extend the results to vector sources albeit with
more complicated notation. %
} can be compressed under the independent/uncorrelated reconstruction
error constraint, we define two rate-distortion functions in which
we respectively constrain the reconstructions error to be i) uncorrelated
with the source $R_{U}(D)$, and ii) independent of the source $R_{I}(D)$.

Assume that we have source $X $ with density $f_X(\cdot)$ that produces the independent identically distributed (i.i.d.) sequence $X_1, X_2, .., X_n$ denoted as $X^n$. Similarly, let ${\hat X_1}, {\hat X_2}, .., {\hat X_n}$ be the reconstruction sequence, denoted as ${\hat X^n}$. Let  $S^n=X^n-{\hat X^n}$ be the i.i.d. sequence of reconstruction errors with marginal density $ f_S(\cdot)$. Let  $f_{XS}(x,s)$ denote
 joint distribution of $X$ and $S$ and $d(X^{n},\hat {X}^n)$
denote the  distortion measure between sequences $X^{n}$
and $\hat{X}^n$ defined as 
\begin{equation}
d(X^{n},\hat {X}^n)=\frac{1}{n}\sum\limits_{i=1}^{n}d(X_i,\hat {X}_i)
\end{equation}
Let us recall the classical rate-distortion result in information theory: 

  \begin{unnumbered}[eg. \cite{coverbook}]
  Let $R(D)$ be the infimum of all achievable rates $R$ with average distortion $\mathbb{E}[d(X^{n},X^n+S^{n})] \leq D$ as $n \rightarrow \infty$. Then, 
  \begin{equation}
R(D)=\inf_{\substack{S:\mathbb{E}[d(X,X+S)]\leq{D} }} I(X;X+S)\label{Runc}\end{equation}
 \end{unnumbered}

We next focus on our problem: let $R_{U}(D)$ be the infimum of all achievable
rates $R$ with average distortion 
\begin{equation}
\lim_{n \rightarrow \infty} \mathbb{E}[d(X^{n},X^n+S^{n})] \leq D
\end{equation}
subject to the constraint 
\begin{equation}
\mathbb E [X_{i}S_{i}]=0,  i=1,2,.., n
\end{equation}
as ${n\to\infty}$. Similarly, let $R_{I}(D)$ be the infimum of all achievable rates $R$ with average
distortion $\mathbb{E}[d(X^{n},X^n+S^{n})] \leq D$ subject to the constraint
$S_i$ is independent of $X_i$ for all $i$, as ${n\to\infty}$. Then, we have the following result characterizing the fundamental limits of source compression under the constraints that reconstruction error is uncorrelated with or independent of the source.

\begin{theorem}\begin{equation}
R_{U}(D)=\inf_{\substack{S:\mathbb{E}[d(X,X+S)]\leq{D}\\
\mathbb{E}[XS]=0}
}I(X;X+S)\label{Runc}\end{equation}
 \begin{equation}
R_{I}(D)=\inf_{\substack{S:{\mathbb{E}[d(X,X+S)]}\leq{D}\\
\mathcal{D}(f_{XS}(X,S)||f_{X}(X)f_{S}(S))=0}
}I(X;X+S)\label{Rind}\end{equation}
 \end{theorem}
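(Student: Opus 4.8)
The plan is to prove the two characterizations as constrained versions of the classical rate--distortion coding theorem, establishing a matching converse (lower bound) and achievability (upper bound) for each. For $R_U(D)$, the converse direction is the one requiring care: given any sequence of codes meeting the distortion constraint $\limsup_n \mathbb{E}[d(X^n, X^n+S^n)] \le D$ and the per-letter orthogonality $\mathbb{E}[X_i S_i]=0$, I would apply the standard single-letterization argument to $I(X^n; \hat X^n)$: using $nR \ge H(\hat X^n) \ge I(X^n;\hat X^n) = \sum_i I(X_i;\hat X^n| X^{i-1}) \ge \sum_i I(X_i;\hat X_i)$ (chain rule, memorylessness of the source, and the fact that conditioning reduces entropy for the source terms), then introduce a time-sharing random variable to write $\frac{1}{n}\sum_i I(X_i;\hat X_i) \ge I(X_Q; \hat X_Q)$ with $S_Q = X_Q - \hat X_Q$. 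The key point is that the averaged test channel automatically inherits both the distortion constraint (by convexity / linearity of the average) \emph{and} the orthogonality constraint: $\mathbb{E}[X_Q S_Q] = \frac{1}{n}\sum_i \mathbb{E}[X_i S_i] = 0$. Hence the induced single-letter $S$ is admissible in the infimum on the right-hand side, giving $R \ge R_U(D)$-style lower bound, i.e. the RHS lower-bounds any achievable rate.

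For achievability of $R_U(D)$, I would take an optimizing (or near-optimizing) single-letter $S^*$ with $\mathbb{E}[XS^*]=0$ and $\mathbb{E}[d(X,X+S^*)]\le D$, and feed the joint distribution $f_{X S^*}$ into the usual random-codebook argument of the rate--distortion achievability proof: generate $\hat X^n$ codewords i.i.d.\ from the induced marginal $f_{\hat X}$, use joint-typicality encoding. The resulting code achieves rate $I(X;X+S^*)+\epsilon$ and distortion $\le D + \epsilon$. The orthogonality constraint $\mathbb{E}[X_i S_i]=0$ is required to hold for the \emph{empirical/expected} joint statistics of the code; since on the jointly typical set the empirical joint distribution is close to $f_{XS^*}$, the expected per-letter correlation $\mathbb{E}[X_i S_i]$ converges to $\mathbb{E}[X S^*]=0$, so the constraint is met asymptotically (or exactly, after a vanishing correction). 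I should remark that the constraint set $\{S : \mathbb{E}[XS]=0,\ \mathbb{E}[d(X,X+S)]\le D\}$ is nonempty and the objective $I(X;X+S)$ is well-behaved, so the infimum is meaningful; one typically also notes the usual convexity of $R_U(D)$ in $D$.

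For $R_I(D)$ the argument is structurally identical, with the orthogonality constraint replaced by the independence constraint $\mathcal{D}(f_{XS}\|f_X f_S)=0$. On the converse side the subtlety is that independence, unlike orthogonality, is \emph{not} preserved under the time-sharing average in general --- mixing independent pairs $(X_i,S_i)$ need not yield an independent pair $(X_Q,S_Q)$. The clean fix is to observe that when $S$ is independent of $X$, $I(X;X+S)$ depends only on the marginal $f_S$, and more importantly that in the single-letterization one can keep the constraint per-letter: each $(X_i,S_i)$ with $S_i\perp X_i$ already satisfies $\mathcal{D}(f_{X_iS_i}\|f_{X_i}f_{S_i})=0$, so $\sum_i I(X_i;\hat X_i) \ge \sum_i \inf_{S_i\perp X_i,\ \mathbb{E}[d]\le D_i} I(X_i;X_i+S_i) = \sum_i R_I(D_i) \ge n R_I(D)$ by convexity of $R_I(\cdot)$ and Jensen, where $D_i = \mathbb{E}[d(X_i,\hat X_i)]$ and $\frac1n\sum_i D_i \le D$. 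Achievability again follows the random-coding template seeded with the optimal independent $S^*$.

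The main obstacle I anticipate is the achievability-side bookkeeping for the constraints: the standard rate--distortion achievability proof guarantees only that the \emph{average} distortion over the random codebook is close to $\mathbb{E}[d(X,X+S^*)]$, and getting the per-letter statistical constraint ($\mathbb{E}[X_iS_i]=0$, resp.\ independence) to hold --- even asymptotically --- for a \emph{fixed} deterministic code extracted from the ensemble requires arguing that the empirical joint type on the typical set concentrates, and that the small residual mismatch can be absorbed (e.g.\ by a slight perturbation of $S^*$ or by expurgation) without disturbing the rate. For the independence constraint this is more delicate than for orthogonality, since ``approximately independent in the sense of small $\mathcal D$'' must be converted into the exact constraint in the limiting functional; I would handle this by a continuity/limiting argument on the sequence of codes, noting that the infimum is a limit and the constraint is closed under the relevant convergence.
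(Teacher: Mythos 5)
Your proposal is sound but follows a genuinely different route from the paper. You prove the two characterizations directly as constrained rate--distortion theorems: a converse by single-letterization in which the orthogonality constraint survives the time-sharing average because it is linear in the joint distribution (and, for independence, by keeping the constraint per-letter and invoking convexity of $R_I(\cdot)$), plus achievability by seeding the standard random-codebook argument with the optimizing test channel. The paper instead reduces to the \emph{unconstrained} theorem by absorbing each constraint into a penalized distortion measure, $d_U = d + \beta\|xs\|$ and $d_I = d + \beta\log\frac{f_{XS}}{f_X f_S}$, and letting $\beta\to\infty$. What your approach buys is transparency about exactly where each constraint enters and what must be verified (preservation under mixing for the converse; empirical-type concentration for achievability); what the paper's approach buys is brevity, at the cost of leaning on the claim that the standard theorem applies to a distortion measure that, in the $d_I$ case, depends on the joint law being optimized, and of a penalty $\beta\|xs\|$ that, read literally as $\beta|xs|$, enforces $\mathbb{E}[|XS|]\to 0$ rather than $\mathbb{E}[XS]=0$. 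The obstacle you flag at the end is real and is shared by both arguments: no deterministic finite-blocklength code can make $S_i$ exactly independent of $X_i$, so the achievability of $R_I(D)$ only makes sense with randomized codes or with the independence constraint interpreted asymptotically; the paper's definition is itself ambiguous on this point, so your explicit acknowledgment of the issue is a feature, not a gap.
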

 
  \begin{proof} Consider the distortion measures
  \begin{equation}
d_{U}(x,x+s)=d(x,x+s)+\beta ||xs||\label{unc}\end{equation}
 \begin{equation}
d_{I}(x,x+s)=d(x,x+s)+\beta\log \frac{f_{XS}(x,s)}{f_{X}(x)f_{S}(s)}\label{ind}\end{equation}
for some $\beta >0$. 

We next consider the rate-distortion functions (denoted $R_{U}^*(D)$ and $R_{I}^*(D)$) associated with these distortion measures. By replacing $d$ with $d_{U}$
and $d_{I}$ in the standard rate-distortion functions, we obtain the following expressions: 
\begin{equation}
R_{U}^*(D)=\inf_{\substack{S:\mathbb{E}[d_{U}(X,X+S)]\leq D}
}I(X;X+S)\end{equation}
 \begin{equation}
R_{I}^*(D)=\inf_{\substack{S:\mathbb{E}[d_{I}(X,X+S)]\leq D}
}I(X;X+S)\end{equation}

We note that the achievability and the converse proofs are straightforward extensions
of the standard achievability and the converse proofs for regular
rate distortion function.

We next consider the distortion measures  $d_{U}$ and $d_{I}$ and associated rate-distortion functions $R_{U}^*(D)$ and $R_{I}^*(D)$ when $\beta \rightarrow \infty$. As $\beta \rightarrow \infty$,  $\mathbb E [d_{U}(X^n,X^n+S^n)] \leq D$ implies $\mathbb E [d(X^n,X^n+S^n)] \leq D$  while $\mathbb E [X_{i}S_{i}] \rightarrow 0$ for all $i$.  Similarly, as $\beta \rightarrow \infty$, $\mathbb E [d_{I}(X^n,X^n+S^n)] \leq D$  implies  $\mathbb E [d(X^n,X^n+S^n)] \leq D$ under the constraint that $X_i$ and $S_i$ are asymptotically independent for all $i$. Hence, as  $\beta \rightarrow \infty$, the distortion measures under consideration satisfy the respective requirements of uncorrelatedness or independence, i.e., $R_{U}(D)= R_{U}^*(D)$ and  $R_{I}(D)= R_{I}^*(D)$.

Hence, (\ref{Runc}) and (\ref{Rind}) are indeed the information
theoretical characterization of the limits of encoding a source with
uncorrelated and independent reconstruction error respectively. 
\end{proof}


\subsection{Gaussian Vector Source with MSE Distortion}

In this section, we examine a special case where the source is vector
Gaussian and the distortion measure is squared error. We start with
an auxiliary lemma without proof (see eg. \cite{coverbook} for the
proof). 

\begin{lemma}[\cite{coverbook}] \label{equality_lemma}

Let $\boldsymbol{S} \sim f_{S}$ and $\boldsymbol{S}_{G} \sim f_{S_{G}}$ be random vectors
in $\mathbb{R}^{K}$with the same covariance matrix ${R}_{S}$.
If $\boldsymbol{S}_{G}\sim\mathcal{N}(\boldsymbol{0},{R}_{S})$ then
 \begin{equation}
\mathbb{E}_{S_{G}}[\log(f_{S_{G}}(\boldsymbol{S}))]=\mathbb{E}_{S}[\log(f_{S_{G}}(\boldsymbol{S}))]
\end{equation}
where  $\mathbb{E}_{S_{G}}$ and $\mathbb{E}_{S}$ denote the expectations
with respect to $f_{S_G}$ and $f_S$ respectively.
\end{lemma}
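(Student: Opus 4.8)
The plan is to exploit the fact that the log-density of a Gaussian is, up to an additive constant, a quadratic form in its argument, so that its expectation depends on the underlying distribution only through the second-moment matrix. Concretely, I would first write out the Gaussian density explicitly: since $\boldsymbol{S}_G \sim \mathcal{N}(\boldsymbol 0, R_S)$,
\begin{equation}
\log f_{S_G}(\boldsymbol s) = -\tfrac{1}{2}\boldsymbol s^{T} R_S^{-1} \boldsymbol s - \tfrac{1}{2}\log\!\left((2\pi)^{K}\det R_S\right).
\end{equation}
The second term is a deterministic constant that does not depend on $\boldsymbol s$, so it contributes identically to both sides.

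Next I would take expectations of the quadratic term under each law. Using the identity $\boldsymbol s^{T} R_S^{-1}\boldsymbol s = \mathrm{tr}\!\left(R_S^{-1}\boldsymbol s\boldsymbol s^{T}\right)$ and linearity of trace and expectation,
\begin{equation}
\mathbb{E}\!\left[\boldsymbol s^{T} R_S^{-1}\boldsymbol s\right] = \mathrm{tr}\!\left(R_S^{-1}\,\mathbb{E}[\boldsymbol s\boldsymbol s^{T}]\right).
\end{equation}
Here I would invoke the paper's standing zero-mean assumption so that $\mathbb{E}[\boldsymbol s\boldsymbol s^{T}]$ equals the covariance matrix in both cases; by hypothesis $\boldsymbol S$ and $\boldsymbol S_G$ share the covariance $R_S$, hence this expectation equals $\mathrm{tr}(R_S^{-1}R_S) = K$ whether the expectation is taken with respect to $f_S$ or $f_{S_G}$. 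Combining the two terms gives $\mathbb{E}_{S}[\log f_{S_G}(\boldsymbol S)] = \mathbb{E}_{S_G}[\log f_{S_G}(\boldsymbol S)] = -\tfrac{K}{2} - \tfrac12\log\!\left((2\pi)^K \det R_S\right)$, which is the claim.

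There is no real obstacle here — the argument is entirely elementary, and the only point requiring a moment's care is the reduction from second moment to covariance, which is licensed by the global zero-mean convention already fixed in the notation section. (One should also note in passing that the identity is exactly the well-known fact underlying the maximum-entropy property of the Gaussian, namely that $\mathcal{D}(f_S \| f_{S_G}) = h(S_G) - h(S) \ge 0$, and this lemma is the algebraic step that makes the cross term in that divergence reduce to $-h(S_G)$.)
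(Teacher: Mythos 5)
Your proof is correct and is exactly the standard argument (the paper itself states this lemma without proof, deferring to Cover and Thomas, where the same quadratic-form computation appears). The one point you rightly flag — that equating $\mathbb{E}[\boldsymbol s \boldsymbol s^{T}]$ with the covariance requires the zero-mean convention — is indeed covered by the paper's standing assumption, and the only other implicit hypothesis is that $R_S$ is nonsingular so that $f_{S_G}$ and $R_S^{-1}$ exist, which the paper also takes for granted.
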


Let us present a key lemma regarding the mutual information of two
correlated random vectors constrained to have a fixed cross covariance
matrix. 

\begin{lemma}
\label{main_lemma}
Let $\boldsymbol{X}\sim\mathcal{N}(\boldsymbol{0},{R}_{X})$
and $\boldsymbol{S}_{G}\sim\mathcal{N}(\boldsymbol{0},{R}_{S})$ be jointly Gaussian random vectors
in $\mathbb{R}^{K}$. Let $\boldsymbol{S}\in \mathbb{R}^{K}$ and $ \boldsymbol{S}_{G}$ have the same covariance matrix, ${R}_{S}$
and the same cross covariance matrix with $\boldsymbol{X}$, ${R}_{SX}$. Then, 
\begin{equation}
I(\boldsymbol{X},\boldsymbol{X}+\boldsymbol{S})\geq I(\boldsymbol{X},\boldsymbol{X}+\boldsymbol{S}_{G})
\end{equation}
 with equality if and only if $\boldsymbol S\sim\mathcal{N}(\boldsymbol 0,{R}_{S})$.
\end{lemma}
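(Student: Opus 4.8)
The plan is to reduce the claim to the data-processing inequality for relative entropy applied to the linear ``sum'' map $T(\boldsymbol x,\boldsymbol s)=\boldsymbol x+\boldsymbol s$. The key structural observation is that, because $\boldsymbol X$ and $\boldsymbol S_G$ are \emph{jointly} Gaussian, the vector $\boldsymbol X+\boldsymbol S_G$ is exactly the Gaussian having the same covariance as $\boldsymbol X+\boldsymbol S$ — the two covariances coincide, since each is determined solely by ${R}_X$, ${R}_S$ and ${R}_{SX}$ — and likewise the pair $(\boldsymbol X,\boldsymbol S_G)$ is the Gaussian having the same covariance as $(\boldsymbol X,\boldsymbol S)$. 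Thus $\boldsymbol X+\boldsymbol S_G$ and $(\boldsymbol X,\boldsymbol S_G)$ play the roles of the ``Gaussianized'' versions of $\boldsymbol X+\boldsymbol S$ and $(\boldsymbol X,\boldsymbol S)$, and $T$ maps the former Gaussian onto the latter.

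First I would write the mutual information symmetrically, $I(\boldsymbol X,\boldsymbol X+\boldsymbol S)=h(\boldsymbol X)+h(\boldsymbol X+\boldsymbol S)-h(\boldsymbol X,\boldsymbol X+\boldsymbol S)$, and use that the change of variables $(\boldsymbol x,\boldsymbol v)\mapsto(\boldsymbol x,\boldsymbol v-\boldsymbol x)$ has unit Jacobian, so $h(\boldsymbol X,\boldsymbol X+\boldsymbol S)=h(\boldsymbol X,\boldsymbol S)$, and similarly with $\boldsymbol S_G$ in place of $\boldsymbol S$. Subtracting the two expressions and cancelling $h(\boldsymbol X)$,
\[
I(\boldsymbol X,\boldsymbol X+\boldsymbol S_G)-I(\boldsymbol X,\boldsymbol X+\boldsymbol S)=\big[h(\boldsymbol X+\boldsymbol S_G)-h(\boldsymbol X+\boldsymbol S)\big]-\big[h(\boldsymbol X,\boldsymbol S_G)-h(\boldsymbol X,\boldsymbol S)\big].
\]
Next I would invoke Lemma~\ref{equality_lemma} in its standard consequence: for any $\boldsymbol Z$ with covariance ${R}_Z$ and $\boldsymbol Z_G\sim\mathcal N(\boldsymbol 0,{R}_Z)$ one has $h(\boldsymbol Z_G)-h(\boldsymbol Z)=\mathcal D(f_{Z}\|f_{Z_G})$. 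Applying this to each bracket (using the covariance matchings noted above), the right-hand side becomes $\mathcal D\!\left(f_{X+S}\,\|\,f_{X+S_G}\right)-\mathcal D\!\left(f_{X,S}\,\|\,f_{X,S_G}\right)$. Since $f_{X+S_G}=T_\ast f_{X,S_G}$, the data-processing inequality for relative entropy applied along $T$ gives $\mathcal D(f_{X+S}\|f_{X+S_G})\le\mathcal D(f_{X,S}\|f_{X,S_G})$, so the whole difference is $\le 0$, i.e. $I(\boldsymbol X,\boldsymbol X+\boldsymbol S)\ge I(\boldsymbol X,\boldsymbol X+\boldsymbol S_G)$.

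For equality, I would combine the equality condition of the data-processing step — the likelihood ratio $f_{X,S}/f_{X,S_G}$ must depend on $(\boldsymbol x,\boldsymbol s)$ only through $\boldsymbol x+\boldsymbol s$ — with the equality condition in Lemma~\ref{equality_lemma} (a vector attains the maximal entropy for its covariance iff it is Gaussian), and argue these can hold together only when $(\boldsymbol X,\boldsymbol S)$ is jointly Gaussian; since $\boldsymbol X$ is already Gaussian with the prescribed cross-covariance ${R}_{SX}$, this is precisely $\boldsymbol S\sim\mathcal N(\boldsymbol 0,{R}_S)$. The step I expect to be the main obstacle is exactly this equality analysis — ruling out a non-Gaussian $\boldsymbol S$ that still makes the data-processing bound tight. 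If that route gets delicate, the fallback is to rewrite the displayed identity, via $h(\boldsymbol X,\boldsymbol S_G)-h(\boldsymbol X+\boldsymbol S_G)=h(\boldsymbol X\mid\boldsymbol X+\boldsymbol S_G)$ and likewise without the subscript, as the equivalent statement $h(\boldsymbol X\mid\boldsymbol X+\boldsymbol S)\le h(\boldsymbol X\mid\boldsymbol X+\boldsymbol S_G)$: the right side is the Gaussian conditional entropy built from the linear-MMSE error covariance (a function of second moments only), and the left side is bounded by it through the conditional maximum-entropy bound, Jensen's inequality for $\log\det$, and the domination of the Bayes error covariance by the linear-MMSE error covariance, with equality along this chain again forcing $(\boldsymbol X,\boldsymbol X+\boldsymbol S)$ — hence $(\boldsymbol X,\boldsymbol S)$ — to be jointly Gaussian. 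Throughout I would assume the usual regularity (densities exist, the relevant entropies and divergences are finite, covariances nonsingular), consistently with the rest of the paper.
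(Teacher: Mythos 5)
Your argument is correct, and at its core it is the paper's proof in different packaging. The paper writes the gap as $h(\boldsymbol S_G\mid\boldsymbol Y_G)-h(\boldsymbol S\mid\boldsymbol Y)$ with $\boldsymbol Y=\boldsymbol X+\boldsymbol S$, applies Lemma~\ref{equality_lemma} to swap the averaging measure in the Gaussian term, and identifies the result directly as the conditional divergence $\mathcal D(f_{S|Y}\|f_{S_G|Y_G})\ge 0$. You instead expand the mutual informations symmetrically, apply Lemma~\ref{equality_lemma} twice to obtain $\mathcal D(f_{X,S}\|f_{X,S_G})-\mathcal D(f_{X+S}\|f_{X+S_G})$, and invoke the data-processing inequality for the sum map; by the chain rule for relative entropy (and its invariance under the unit-Jacobian bijection $(\boldsymbol x,\boldsymbol s)\mapsto(\boldsymbol s,\boldsymbol x+\boldsymbol s)$) that difference is exactly the paper's conditional divergence, so your DPI step is precisely the nonnegativity of that conditional term. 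Your version buys modularity (the inequality is outsourced to a standard theorem); the paper's buys an explicit expression for the gap, which makes the equality discussion slightly more direct. One small correction to your equality analysis: in your chain the \emph{only} inequality is the DPI, so there is no ``equality condition of Lemma~\ref{equality_lemma}'' to combine with --- that lemma enters only as an identity --- and tightness reduces to the single condition that $f_{X,S}/f_{X,S_G}$ be a function of $\boldsymbol x+\boldsymbol s$ alone. Closing the ``only if'' direction from there (e.g., by integrating $f_{X,S}=r(\boldsymbol x+\boldsymbol s)f_{X,S_G}$ over $\boldsymbol s$, matching the common Gaussian $\boldsymbol X$-marginal, and using that the Gaussian characteristic function vanishes nowhere to force $r\equiv 1$) is indeed the one genuinely delicate point, but the paper itself simply asserts this direction, so your sketch is not behind it in rigor.
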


 \begin{proof} 
 Consider $\gamma\!=\! I(\boldsymbol{X},\boldsymbol{X}\!+\!\boldsymbol{S})\!-\! I(\boldsymbol{X},\boldsymbol{X}\!+\!\boldsymbol{S}_{G})$. Plugging the expressions, we obtain:
\begin{equation}
\gamma= h(\boldsymbol{X}|\boldsymbol{S_G}+\boldsymbol{X})- h(\boldsymbol{X}|\boldsymbol{S}+\boldsymbol{X})
\end{equation}
Noting that $h(\boldsymbol{X}|\boldsymbol{S_G}+\boldsymbol{X})=h(\boldsymbol{S_G}|\boldsymbol{S_G}+\boldsymbol{X})$ and $h(\boldsymbol{X}|\boldsymbol{S}+\boldsymbol{X})=h(\boldsymbol{S}|\boldsymbol{S}+\boldsymbol{X})$ and plugging $\boldsymbol{Y}=\boldsymbol{X}+\boldsymbol{S}$ and $\boldsymbol{Y}_{G}=\boldsymbol{X}+\boldsymbol{S}_{G}$, we obtain: 
\begin{align}
 & \gamma=h(\boldsymbol{S}_{G}|\boldsymbol{Y}_{G})-h(\boldsymbol{S}|\boldsymbol{Y})=\\
 &\!\!\!\int\!\!\!\int\!\!\! \left \{ f_{S,Y}(\boldsymbol{s},\!\boldsymbol{y})\!\log f_{S|Y}(\boldsymbol{s},\boldsymbol{y})\!\!\! -\!\!\! f_{S_{G},Y_{G}}(\boldsymbol{s},\!\boldsymbol{y})\!\log f_{S_{G}|Y_{G}}\!(\boldsymbol{s},\boldsymbol{y})\!\right \}\! d\boldsymbol{s}d\boldsymbol{y} \!\!\! \end{align}
 Using Lemma \ref{equality_lemma} and the fact that the joint distribution
$f_{S_{G},Y_{G}}$ is Gaussian: \begin{align}
= & \!\!\!\int \!\!\! \int \!\!\! f_{S,Y}(\boldsymbol{s},\boldsymbol{y})\left[\log f_{S|Y}(\boldsymbol{s},\boldsymbol{y})\!-\!\log f_{S_{G}|Y_{G}}(\boldsymbol{s},\boldsymbol{y})\right]d\boldsymbol{s}d\boldsymbol{y}\\
= & \int f_{Y}(\boldsymbol{y})\int f_{S|Y}(\boldsymbol{s},\boldsymbol{y}) \log\frac{ f_{S|Y}(\boldsymbol{s},\boldsymbol{y})}{f_{S_{G}|Y_{G}}(\boldsymbol{s},\boldsymbol{y})}d\boldsymbol{s}d\boldsymbol{y}\\
= &\mathcal{D}(f_{S|Y},f_{S_{G}|Y_{G}})\end{align}
 $\mathcal{D}\geq0$ with equality if and only if $\boldsymbol S\sim\mathcal{N}(\boldsymbol 0,{R}_{S}).$

\end{proof} 

Next, we present our main result on this topic:

 \begin{theorem}
\label{main_result} For a Gaussian vector source $\boldsymbol{X} \in \mathbb R^K$
and MSE distortion $d(\boldsymbol{x},\boldsymbol{ \hat x})=(\boldsymbol{x}-\boldsymbol{\hat x})^{T}(\boldsymbol{x}-\boldsymbol{\hat x})$,
the following holds:
\begin{equation}
R_{I}(D)=R_{U}(D)\label{eq}\end{equation}
 \end{theorem}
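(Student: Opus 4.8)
The plan is to show the two infima in Theorem 2 coincide for the Gaussian source by establishing both inequalities. The inequality $R_U(D) \le R_I(D)$ is immediate from the definitions: independence implies uncorrelatedness, so the feasible set defining $R_I(D)$ is contained in the feasible set defining $R_U(D)$, and the infimum over a larger set is no larger. All the work is in the reverse inequality $R_I(D) \ge R_U(D)$, i.e., showing that any uncorrelated-error scheme can be matched (in rate and distortion) by an independent-error scheme.

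The key idea is Gaussianization of the error via Lemma \ref{main_lemma}. First I would fix an arbitrary feasible $\boldsymbol S$ for the $R_U(D)$ problem, with covariance $R_S$ and cross-covariance $R_{SX} = \mathbb{E}[\boldsymbol S \boldsymbol X^T] = \boldsymbol 0$ (the uncorrelatedness constraint). Introduce $\boldsymbol S_G \sim \mathcal N(\boldsymbol 0, R_S)$ jointly Gaussian with $\boldsymbol X$ and sharing the same cross-covariance $R_{SX} = \boldsymbol 0$ with $\boldsymbol X$; since $R_{SX} = \boldsymbol 0$, this jointly Gaussian pair actually has $\boldsymbol S_G$ \emph{independent} of $\boldsymbol X$ (for jointly Gaussian vectors, zero cross-covariance is equivalent to independence). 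Now observe: (i) the MSE distortion depends on $\boldsymbol S$ only through its second moments, namely $\mathbb{E}\,\|\boldsymbol S\|^2 = \operatorname{tr} R_S$, so $\boldsymbol S_G$ achieves exactly the same distortion and hence is feasible for the $R_I(D)$ problem; and (ii) by Lemma \ref{main_lemma}, $I(\boldsymbol X; \boldsymbol X + \boldsymbol S) \ge I(\boldsymbol X; \boldsymbol X + \boldsymbol S_G)$. Therefore every rate achievable by a feasible uncorrelated-error $\boldsymbol S$ is matched by the independent-error $\boldsymbol S_G$ at the same distortion, which yields $\inf$ over the $R_U$-feasible set $\ge \inf$ over the $R_I$-feasible set, i.e., $R_U(D) \ge R_I(D)$. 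Combined with the trivial direction, this gives equality.

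I would also remark that Theorem 2 has already reduced both quantities to single-letter information-theoretic optimizations over the distribution of $\boldsymbol S$ (with $\boldsymbol X$ i.i.d.\ Gaussian), so the argument can be carried out entirely at the single-letter level; no further appeal to block-coding arguments is needed here. The main subtlety — and the one point to state carefully rather than wave through — is the claim in step (i) that the Gaussianized error $\boldsymbol S_G$ is genuinely \emph{independent} of $\boldsymbol X$, not merely uncorrelated: this is exactly where the Gaussianity of the source enters, because it is the jointly Gaussian structure of $(\boldsymbol X, \boldsymbol S_G)$ together with $R_{SX} = \boldsymbol 0$ that upgrades uncorrelatedness to independence. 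One should confirm that Lemma \ref{main_lemma} applies in the form needed, namely with the cross-covariance set to zero, and that feasibility is preserved (distortion unchanged, constraint $\mathbb{E}[\boldsymbol X \boldsymbol S_G^T] = \boldsymbol 0$ automatically satisfied). Everything else is bookkeeping about infima over nested feasible sets.
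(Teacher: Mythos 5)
Your proposal is correct and follows essentially the same route as the paper: the easy direction from nested feasible sets, and the hard direction via Lemma~\ref{main_lemma} (Gaussianization of the error preserves the second moments, hence distortion and the zero cross-covariance constraint, while not increasing the mutual information), followed by the observation that a jointly Gaussian pair with zero cross-covariance is independent. Your phrasing --- replacing \emph{every} feasible $\boldsymbol S$ by its Gaussian surrogate rather than speaking of ``the minimizer'' --- is a slightly cleaner way to handle the infimum, but the substance is identical to the paper's argument.
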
 
 \begin{proof} Generally, $R_{I}(D)\geq R_{U}(D)$,
since independent reconstruction error is also uncorrelated. Note
that the uncorrelated error constraint dictates ${R}_{SX}={0}$
and the distortion constraint is $Tr({R}_{S})=D$. Lemma
\ref{main_lemma} states that under these constraints, for a Gaussian
source, Gaussian reconstruction error minimizes the mutual information
between the source and the reconstruction, i.e., $I(\boldsymbol{X}_{G},\boldsymbol{X}_{G}+\boldsymbol{S})$
achieves its minimum when $\boldsymbol{S}\sim\mathcal{N}(\boldsymbol{0},{R}_{S})$.
Then, $\boldsymbol{X}_{G}$ and $\boldsymbol{S}$ are uncorrelated
and jointly Gaussian and are, thereby, also independent. \end{proof}

We next pose the question: Are there cases where the best possible
vector quantizer at asymptotically high dimension that renders the
reconstruction error uncorrelated with the source, is necessarily a randomized
one? The next corollary answers in the affirmative, as is proved by the Gaussian case. 

\begin{corollary} For a Gaussian source, at asymptotically high
quantizer dimension, the quantizer that achieves minimum distortion
subject to the uncorrelated error constraint is necessarily a randomized
one. \end{corollary}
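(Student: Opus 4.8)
The plan is to argue by contradiction, combining Theorem~\ref{main_result}, the equality condition of Lemma~\ref{main_lemma}, and the elementary fact \cite{gershobook} that the reconstruction error of a deterministic quantizer is a deterministic function of the source and therefore, at any distortion $0<D<\sigma_{X}^{2}$, cannot be independent of the source. Fix such a $D$ and suppose, toward a contradiction, that there is a sequence of \emph{deterministic} quantizers of dimension $K\to\infty$, each obeying the coordinatewise uncorrelatedness constraint $\mathbb{E}[X_{i}S_{i}]=0$ with per-letter distortion at most $D$, whose rate approaches the optimum $R_{U}(D)$. By Theorem~\ref{main_result} this optimum equals $R_{I}(D)$, which by the proof of that theorem (the equality case of Lemma~\ref{main_lemma}) is attained \emph{only} by a test channel whose reconstruction error is Gaussian and independent of the source; so a deterministic quantizer approaching $R_{U}(D)$ would have to be approaching this Gaussian, source-independent structure --- which it structurally cannot realize.

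The core of the argument is to make ``approaching $R_{U}(D)$'' entail ``nearly Gaussian, nearly source-independent error.'' Since $\boldsymbol X^{K}$ is i.i.d.\ Gaussian I would single-letterize, $R\ge\tfrac1K I(\boldsymbol X^{K};\hat{\boldsymbol X}^{K})\ge\tfrac12\log(2\pi e\sigma_{X}^{2})-\tfrac1K\sum_{i}h(X_{i}\mid\hat X_{i})$, and then bound each conditional entropy using only the constraint: $\mathbb{E}[X_{i}S_{i}]=0$ pins $\mathrm{Cov}(X_{i},\hat X_{i})=\sigma_{X}^{2}$ and $\mathrm{Var}(\hat X_{i})=\sigma_{X}^{2}+D_{i}$, whence a short computation (conditional-variance bound plus maximum entropy) gives $h(X_{i}\mid\hat X_{i})\le\tfrac12\log\bigl(2\pi e\,\sigma_{X}^{2}D_{i}/(\sigma_{X}^{2}+D_{i})\bigr)$, and convexity and monotonicity of $D\mapsto\tfrac12\log\frac{\sigma_{X}^{2}+D}{D}$ recover $R\ge R_{U}(D)$. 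Equality in the limit forces every step asymptotically tight, so for all but a vanishing fraction of coordinates the law of $X_{i}$ given $\hat X_{i}=\hat x$ must approach $\mathcal N\bigl(\tfrac{\sigma_{X}^{2}}{\sigma_{X}^{2}+D}\hat x,\ \tfrac{\sigma_{X}^{2}D}{\sigma_{X}^{2}+D}\bigr)$ --- i.e.\ $X_{i}$ becomes (nearly) a translate-by-$\hat X_{i}$ of one fixed-variance Gaussian. But for a deterministic quantizer $\hat X_{i}$ takes values in a countable codebook, so this would render the $\mathcal N(0,\sigma_{X}^{2})$ law of $X_{i}$ a discrete mixture of translates of a single Gaussian of variance $<\sigma_{X}^{2}$, which is impossible (immediately from characteristic functions). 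Hence no deterministic quantizer approaches $R_{U}(D)$; since $R_{U}(D)=R_{I}(D)$ \emph{is} attained asymptotically by a dithered (e.g.\ lattice) quantizer realizing the additive-Gaussian test channel \cite{zamiruqr,zamirirp}, the constrained-optimal quantizer must be randomized.

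The step I expect to be the main obstacle is making the implication ``rate within $\epsilon$ of $R_{U}(D)$ $\Rightarrow$ error within $\delta(\epsilon)$ of Gaussian-and-independent'' quantitative and uniform in $K$: as the codebook grows, the exact impossibility above weakens to an approximate one, so one must lower-bound, independently of $K$, the rate overhead forced on a deterministic quantizer. I would obtain this from a stability (entropy-power or transportation-type) refinement of Lemma~\ref{main_lemma}, bounding $I(\boldsymbol X;\boldsymbol X+\boldsymbol S)-I(\boldsymbol X;\boldsymbol X+\boldsymbol S_{G})$ below by a modulus of the non-Gaussianity of $\boldsymbol S$, together with a matching lower bound on that non-Gaussianity for any per-letter error that is a deterministic function of a Gaussian source at distortion $D<\sigma_{X}^{2}$. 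The remaining entropy-inequality manipulations are routine.
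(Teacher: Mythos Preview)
Your core strategy---invoke Theorem~\ref{main_result} ($R_{U}=R_{I}$), use the equality case of Lemma~\ref{main_lemma} to conclude that the optimizing error must be Gaussian and independent of the source, and then note that a deterministic quantizer cannot produce error independent of the source---is exactly the paper's proof. The paper's argument, however, is three sentences long: it simply states that the optimal reconstruction error (for the Gaussian source under the uncorrelatedness constraint) is independent of the source, observes that a deterministic quantizer can never have this property, and concludes. There is no single-letterization, no conditional-variance computation, no mixture-of-Gaussians contradiction, and no stability refinement of Lemma~\ref{main_lemma}.

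The substantial extra machinery you propose is aimed at a genuine subtlety that the paper does not engage with: the distinction between a quantizer that \emph{exactly realizes} the optimal test channel and a sequence of deterministic quantizers whose rates merely \emph{approach} $R_{U}(D)$. The paper reads ``achieves minimum distortion'' at the level of the single-letter rate--distortion characterization and takes the equality condition of Lemma~\ref{main_lemma} as dispositive; your concern that a deterministic code could get arbitrarily close without ever having exactly independent error is legitimate, and your proposed route (tightness in the chain of inequalities forces near-Gaussian conditionals, which a countable-codebook reconstruction cannot support) is a reasonable way to try to close it. One caution: in your per-letter argument, for a block code $\hat X_{i}$ is a function of the whole vector $X^{K}$, not of $X_{i}$ alone, so while $\hat X_{i}$ does have countable support, the ``deterministic given the source'' intuition must be applied at the vector level, which is why the quantitative stability step is indeed where the work lies. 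None of this appears in the paper; it treats the corollary as an immediate consequence of Theorem~\ref{main_result}.
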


\begin{proof} From Theorem \ref{main_result}, the reconstruction
error for the Gaussian source subject to the uncorrelated error constraint
is independent of the source. No deterministic quantizer can render
the quantization noise independent from the source by definition;
hence, the optimal quantizer is a randomized one. \end{proof}

Note that our results hold only asymptotically, it is still an open
question whether or not they hold at finite dimensions. The numerical
results in the next section support the thesis that randomized quantizers are better  at
finite dimensions.

\section{Experimental Results}

In this section, we numerically compare the proposed quantizers to the conventional (uniform) dithered quantizer  and to the optimal quantizer, for a standard unit variance scalar Gaussian source. We implemented the proposed quantizers by numerically calculating the derived integrals. For that purpose, we sampled the distribution on a uniform grid. We
also imposed bounded support ($-3$ to $3$) i.e., we
numerically neglected the tails of the Gaussian. In this paper, we proposed three quantizers:

{\bf {Quantizer 1}}: Unconstrained randomized quantizer. This quantizer does not render the reconstruction error uncorrelated with the source.

{\bf {Quantizer 2}}: Constrained randomized quantizer which renders the quantization error uncorrelated with the source.

{\bf {Quantizer 3}}: Constrained deterministic quantizer which renders the quantization error uncorrelated with the source.

\begin{figure}
\centering \includegraphics[scale=0.65]{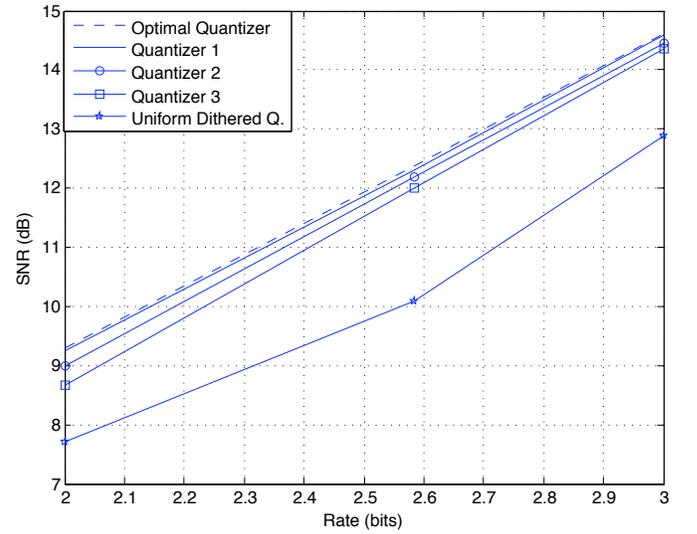} \caption{Performance comparison in terms of SNR versus rate for fixed rate quantization.}
\label{fig2} %
\end{figure}

\begin{figure}
\centering \includegraphics[scale=0.65]{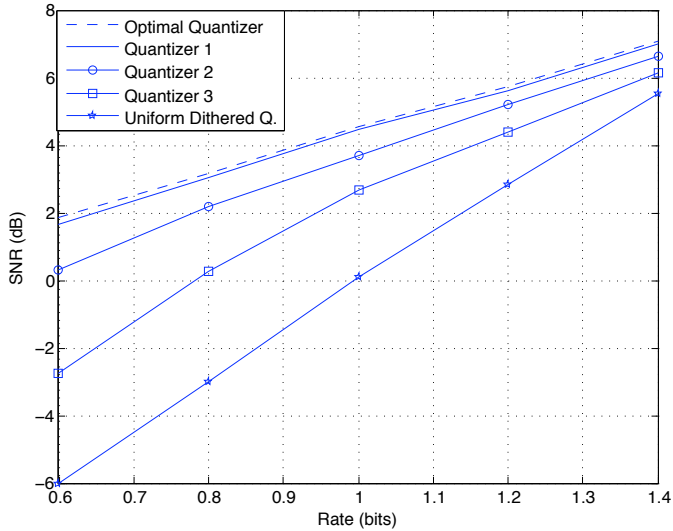} \caption{Performance comparison in terms of SNR versus rate for variable rate quantization.}
\label{fig3} %
\end{figure}

Figures 3 and 4 demonstrate the performance comparisons among quantizers for fixed and variable rates respectively. Note that for both fixed and variable rate, the optimal randomized quantizer performs very close to the optimal quantizer. However, it does not provide the statistical benefits of the other  quantizers. 

Note that for fixed rate, conventional (uniform) dithered quantization suffers significantly from the suboptimality of having equal quantization intervals irrespective of the rate region. 
However, at variable rate, the difference between the proposed and conventional dithered quantizer diminish at high rates, while at low rates the difference is quite significant. This is theoretically expected since at high rates, the optimal variable rate quantizer is very close to uniform, hence there is not much to gain from using a non-linear compressor-expander.  

For both fixed and variable rate, the constrained randomized quantizer outperforms  its deterministic counterpart, while both of them perform significantly better than the conventional dithered quantizer. Both of the proposed quantizers render quantization error {\it uncorrelated} with the source with low performance degradation while the dithered uniform quantizer renders  error {\it independent} of the source but (depending on the rate) at significant distortion penalty.

An additional benefit of the proposed random quantizers pertains to the correlation of the reconstruction errors  when correlated sources are quantized. The conventional dithered quantizer renders quantization error independent of the source hence, when two correlated sources are quantized with a dithered quantizer, the reconstruction errors are uncorrelated. For deterministic quantizers (Quantizer 3 and the optimal quantizer), the reconstruction error is a deterministic function of the source hence, intuitively randomized quantizers are expected to have lower reconstruction error correlation. Figures 5 and 6 demonstrate the correlation of the reconstruction error for different values of  source correlation for a bivariate Gaussian source for fixed and variable rate quantization respectively. These numerical results illustrate this intuitive conclusion:  randomization is significantly useful in decreasing the correlations of reconstruction errors. Specifically, the constrained randomized quantizer (Quantizer 2) yields extremely low error correlation, very close to that of the conventional dithered quantization. This property is particularly useful in practical applications such as image-video compression where white reconstruction error is preferred due to audio-visual considerations, see eg.  deblocking filters commonly used in video coding \cite{list2003adaptive}. Also note that, the unconstrained randomized quantizer (Quantizer 1) significantly decreases the error correlation compared to the optimal quantizer, with negligible sacrifice in rate distortion performance. Hence, this statistical benefit of randomization comes with no significant penalty. 

\begin{figure}
\centering \includegraphics[scale=0.65]{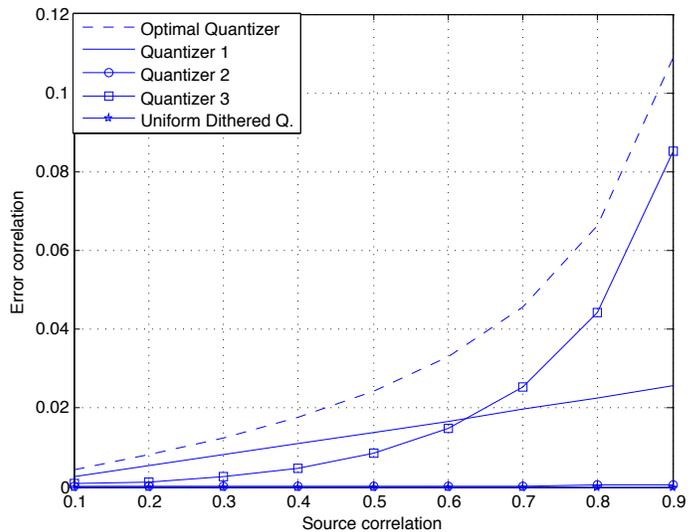} \caption{Correlation of the reconstruction error versus source correlation for fixed rate quantization at rate $R=2$bits/sample.}
\label{fig2} %
\end{figure}

\begin{figure}
\centering \includegraphics[scale=0.65]{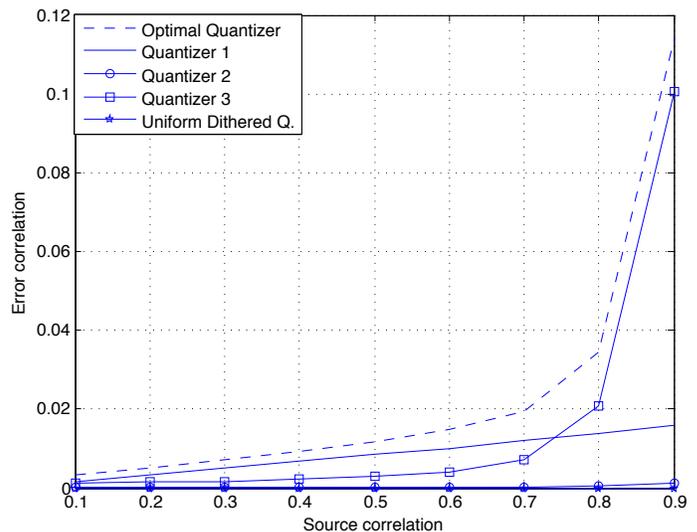} \caption{Correlation of the reconstruction error versus source correlation for variable rate quantization at rate $R=1.4$ bits/sample.}
\label{fig3} %
\end{figure}

Numerical comparisons show that the proposed quantization schemes
can significantly impact the design of compression systems such as
\cite{guleryuz,mihcak} where quantization error is assumed to be 
uncorrelated with the source. Note that the constrained randomized
quantization satisfies this assumption exactly and significantly outperforms
the conventional dithered quantization, which has been presented in
such prior work as the viable option to satisfy these assumptions.
In fact, as an alternative to the conventional dithered quantization that satisfies these assumptions at the considerable performance cost, we
derived additional quantization schemes that satisfy those
assumptions: constrained deterministic quantization and constrained nonuniform
random quantization. We also derived an unconstrained randomized quantizer, which performs almost as well as the optimal (deterministic) quantizer, yet offers perceptual benefits typical to dithered quantization. 

While it is difficult to prove, in general, the strict superiority
of these new quantizers over the conventional dithered quantizer,
we numerically show it, for both fixed and variable rate quantizers,
in Figures 3 and 4. Moreover, the numerical results motivate a theoretical
proposition: The optimal vector quantizer that renders the reconstruction
error orthogonal to the source is necessarily randomized. While we proved this
result at asymptotically high dimensions, it remains a conjecture  at finite dimensions, based on the numerical results in
this section.

\section{Discussion}

In this paper, we proposed a nonuniform randomized quantizer where dithering is performed in the companded domain to circumvent the problem of matching the dither
range to varying quantization intervals. The optimal compressor and expander mappings that minimize the mean square error are found via a novel numerical method.
Also, we discovered the connections between the optimal quantizer and the one whose
reconstruction error is constrained to be orthogonal to the source, for both deterministic and randomized quantization. The proposed constrained randomized quantization outperforms conventional dithered quantization and also the constrained deterministic quantizer proposed in this paper, while still satisfying the requirement that the reconstruction error be
uncorrelated with the source. Moreover, the proposed randomized quantizers significantly reduce the correlations across reconstruction errors when correlated sources, i.e., sources with memory, are quantized. 
We also showed that at asymptotically high dimensions, the MSE optimal vector quantizer designed for a vector Gaussian source,
which renders the reconstruction error uncorrelated with the source, must be a randomized quantizer. As  future work, we will investigate the applicability of this result to a broader class of sources where random encoding is not merely a tool to derive  rate-distortion bounds, but a necessary element in practical systems approaching such bounds.

 \bibliographystyle{IEEEbib}
\bibliography{ref}

\end{document}